\newtheorem{lemma}{Lemma}
\newtheorem{corollary}{Corollary}
\newcommand{\va}{{\bm{a}}}
\newcommand{\vh}{{\bm{h}}}
\newcommand{\vn}{{\bm{n}}}
\newcommand{\vy}{{\bm{y}}}
\newcommand{\ma}{{\bm{A}}}
\newcommand{\mc}{{\bm{C}}}
\newcommand{\mw}{{\bm{W}}}
\newcommand{\my}{{\bm{Y}}}
\newcommand{\vrho}{{\bm{\rho}}}
\newcommand{\vdel}{{\bm{\delta}}}
\newcommand{\id}{{\mathrm{d}}}
\newcommand{\tr}{\mathrm{tr}}
\newcommand{\diag}{\mathrm{diag}}
\newcommand{\He}{\mathrm{H}}
\newcommand{\T}{\mathrm{T}}
\newcommand{\R}{\mathrm{R}}
\newcommand{\I}{\mathbf{I}}
\renewcommand{\j}{\mathrm{j}}
\newcommand{\NC}{\mathcal{N}_\mathbb{C}}
\newcommand{\ve}[1]{\mathrm{vec}\left(#1\right)}
\newcommand{\C}{\mathbb{C}}
\newcommand{\E}{\mathbb{E}}
\newtheorem{theorem}{Theorem}
\tikzstyle{block} = [draw, rectangle, 
\tikzstyle{input} = [coordinate]
\tikzstyle{output} = [coordinate]
\tikzstyle{pinstyle} = [pin edge={to-,thin,black}]
\tikzset{radiation/.style={{decorate,decoration={expanding waves,angle=90,segment length=5pt}}}}
\pgfplotsset{compat=newest}
\def\BibTeX{{\rm B\kern-.05em{\sc i\kern-.025em b}\kern-.08em
		T\kern-.1667em\lower.7ex\hbox{E}\kern-.125emX}}
\newacronym{GMM}{GMM}{Gaussian mixture model}
\newacronym{PDF}{PDF}{probability density function}
\newacronym{MSE}{MSE}{mean square error}
\newacronym{CSI}{CSI}{channel state information}
\newacronym{CME}{CME}{conditional mean estimator}
\newacronym{ML}{ML}{maximum likelihood}
\newacronym{LS}{LS}{least squares}
\newacronym{LOS}{LoS}{line-of-sight}
\newacronym{NLOS}{NLoS}{non-\ac{LOS}}
\newacronym{DoA}{DoA}{direction-of-arrival}
\newacronym{DoD}{DoD}{direction-of-departure}
\newacronym{SNR}{SNR}{signal-to-noise ratio}
\newacronym{TNR}{TNR}{time-to-noise ratio}
\newacronym{BS}{BS}{base station}
\newacronym{JCAS}{JCAS}{joint communication and sensing}
\newacronym{MIMO}{MIMO}{multiple-input-multiple-output}
\newacronym{MMSE}{MMSE}{minimum \ac{MSE}}
\newacronym{NMSE}{NMSE}{normalized \ac{MSE}}
\newacronym{RMSE}{RMSE}{root \ac{MSE}}
\newacronym{LMMSE}{LMMSE}{linear minimum \ac{MSE}}
\newacronym{MT}{MT}{mobile terminal}
\newacronym{UE}{UE}{user equipment}
\newacronym{OMP}{OMP}{orthogonal matching pursuit}
\newacronym{CS}{CS}{compressed sensing}
\newacronym{ULA}{ULA}{uniform linear array}
\newacronym{DFT}{DFT}{Discrete Fourier Transform}
\newacronym{CRB}{CRB}{Cramer-Rao bound}
\newacronym{MUSIC}{MUSIC}{multiple signal classification}
\newacronym{rMUSIC}{rMUSIC}{root multiple signal classification}
\newacronym{ESPRIT}{ESPRIT}{estimation of signal parameters via rotational invariance techniques}
\newacronym{GE}{GE}{gridded estimator}
\newacronym{AWGN}{AWGN}{additive white Gaussian noise}
\newacronym{OFDM}{OFDM}{orthogonal frequency division multiplexing}
\newacronym{ADC}{ADC}{analog-to-digital converter}
\newacronym{wlog}{w.l.o.g.}{without loss of generality}
\newacronym{mmWave}{mmWave}{millimeter wave}
\newacronym{MAP}{MAP}{maximum a-posteriori}
\newacronym{PBCE}{PBCE}{parametric Bayesian channel estimator}
\newacronym{CGLM}{CGLM}{conditional Gaussian latent model}
\newacronym{AB}{AB}{asymptotic bound}
\newcommand{\change}[1]{\textcolor{black}{#1}}
\newcommand{\new}[1]{\textcolor{black}{#1}}
\begin{document}

\title{
On the Asymptotic MSE-Optimality of Parametric Bayesian Channel Estimation in mmWave Systems
}
\author{\IEEEauthorblockN{Franz Weißer,~\IEEEmembership{Graduate Student Member,~IEEE,} and Wolfgang Utschick,~\IEEEmembership{Fellow,~IEEE}
\thanks{This work was supported by the Federal Ministry of Research, Technology and Space of Germany in the programme of “Souverän. Digital. Vernetzt.”. Joint project 6G-life, project identification number: 16KISK002.}
\thanks{The authors are with the TUM School of Computation, Information and Technology, Technical University of Munich, 80333 Munich, Germany (e-mail: franz.weisser@tum.de).}}}

\maketitle

\thispagestyle{cfooter}

\begin{abstract}
    The mean square error (MSE)-optimal estimator is known to be the conditional mean estimator (CME). 
    This paper introduces a parametric channel estimation technique based on Bayesian estimation.
    This technique uses the estimated channel parameters to parameterize the well-known LMMSE channel estimator.
    We first derive an asymptotic CME formulation that holds for a wide range of priors on the channel parameters.
    Based on this,
    we show that parametric Bayesian channel estimation is MSE-optimal for high signal-to-noise ratio (SNR) and/or long coherence intervals, i.e., many noisy observations provided within one coherence interval.
    Numerical simulations validate the derived formulations.

\end{abstract}

\begin{IEEEkeywords}
	Channel estimation, parameter estimation, conditional mean estimation, mean square error, Cramer-Rao bound
\end{IEEEkeywords}

\section{Introduction}

\IEEEPARstart{A}{ccurate} channel estimation is a critical, long-standing issue in wireless communications systems.
From classical estimation theory, it is well-understood that the \ac{CME} is the optimal estimator in terms of \ac{MSE} and all other Bregman loss functions~\cite{Banerjee2005}.
Thus, many works have been dedicated to analyzing the \ac{CME} and approximations thereof, e.g.,~\cite{Guo2011,Neumann2018,Koller2022,Weisser2024b}.
As the \ac{CME} needs access to the true underlying distribution, this distribution is estimated (implicitly or explicitly) based on a representative data set using machine learning.

In contrast to stochastic models, parametric channel estimators leverage the channel's structural properties for estimation.
In recent years, several such estimators have been proposed for \ac{mmWave} systems, e.g.,~\cite{Yang2001,Shafin2016,Larsen2009,Hassan2020,Zhang2021},
\change{as the assumption of sparse channels is valid for high frequencies.}
These techniques \change{use unbiased estimators for the channel parameters and directly construct the channel accordingly.}
Additionally, parameter and parametric channel estimation become more relevant with the proposal of \ac{JCAS} systems, as other system functionalities either provide or need the parameter information.
A comprehensive overview of parametric channel estimation techniques for \ac{mmWave} systems can be found in~\cite{Hassan2020}.

Naturally, the question arises if parametric channel estimation can achieve the \ac{MMSE}.
It is shown in~\cite{Yang2001,Shafin2016} that \ac{DoA}-based channel estimation can achieve superior performance compared to \change{an approximate \ac{LMMSE} estimator, 
which leverages estimates of the second order statistics based on the observations.}
\change{The authors in~\cite{Larsen2009} derive bounds}
based on the \ac{CRB} of the error covariance matrix, which serve as a lower bound for unbiased parametric channel estimators.
As the \ac{CME} is not an unbiased estimator \change{and is conceptually capable of extending the maximum-likelihood estimation principle}, 
\change{solely focusing on the \ac{CRB} analysis is generally not optimal as the \ac{MMSE} may lie below the \ac{CRB}.}

\emph{Contributions:} 
We derive an expression of the \ac{CME} for high \ac{SNR} and/or long coherence intervals.
\change{We show that in this regime the \ac{CME} does not utilize} \change{prior distribution knowledge but solely works with the current observation.}
Based on the Bayesian approach, we introduce a parametric channel estimator that achieves this \ac{MMSE} in the asymptotic region, enabling \ac{MSE}-optimal channel estimates without knowledge of the underlying channel \change{or channel parameter} distribution.
Lastly, rigorous simulations show the validity of the derived bounds in the asymptotic region.

\section{Preliminaries}

\subsection{System Model}

\change{In time-varying \ac{MIMO} systems with multiple subcarriers, such as typical \ac{OFDM} systems, the baseband model of resource block $t$ can be described by}
\begin{align}
    \bm{\mathcal{Y}}\change{(t)} = \bm{\mathcal{H}}\change{(t)} + \bm{\mathcal{N}}\change{(t)} \in \C^{N_f\times N_t \times N_\R \times N_\T}, \label{eq:sys_model_tensor}
\end{align}
where $\bm{\mathcal{H}}\change{(t)}$ and $\bm{\mathcal{N}}\change{(t)}$ are tensors denoting the channel and the additive white Gaussian noise and $N_f$, $N_t$, $N_\R$, and $N_\T$ denote the number of subcarriers, symbols, receive and transmit antennas, respectively.
The system model given in~\eqref{eq:sys_model_tensor} can equivalently be described in its vectorized form as
\begin{align}
    \vy\change{(t)} = \ve{\bm{\mathcal{Y}}\change{(t)}} 
    = \vh\change{(t)} + \vn\change{(t)},
\end{align}
with $\vy\change{(t)}\in\C^{N_f N_t  N_\R  N_\T}$ \change{and $\vn(t)\sim\NC(\bm{0},\sigma^2\I)$.}
\change{The geometric channel model
consisting of $L$ paths is given as}
\begin{align}
    \vh({t}) = \sum_{\ell}^L \alpha_\ell({t}) \va_f(\tau_\ell) \otimes \va_t(\nu_\ell) \otimes \va_\R(\change{\theta_\ell}) \otimes \va_\T(\change{\phi_\ell}), \label{eq:gen_chan_model}
\end{align}
where $\otimes$ denotes the Kronecker product and $\alpha_\ell({t})$
, $\tau_\ell$, $\nu_\ell$, \change{$\theta_\ell$}, and \change{$\phi_\ell$} 
denote the complex path loss,
delay, Doppler-shift, \ac{DoA}, and \ac{DoD} of the $\ell$-th path, respectively.
We assume constant parameters within a coherence interval $T$, i.e., constant over \change{$T/N_t$ blocks}, where only $\alpha_\ell({t})$ depends \change{on the block index $t\in\{1,\dots,T/N_t\}$.
Further, the path gain $\alpha_\ell(t)$ is assumed to be invariant within one block.}
The generalized steering vector, \change{which samples equidistantly in each of the respective domains, is defined as}
\begin{align}
    \va_\xi(\omega^\xi_\ell) = \frac{1}{\sqrt{N_\xi}}[1,e^{-\j\omega^\xi_\ell},\dots,e^{-\j(N_\xi-1)\omega^\xi_\ell}]^\T,
\end{align}
where $\xi \in\{f,t,\mathrm{R},\mathrm{T}\}$ and $\omega^\xi_\ell$ depends on $\tau_\ell$, $\nu_\ell$, \change{$\theta_\ell$}, and \change{$\phi_\ell$}, 
respectively.
This work assumes $N_f=N_t=N_\T=1$, as the results can be straightforwardly extended.
Hence, we use 
$\va(\omega_\ell)=\va_\mathrm{R}(\pi\sin(\change{\theta_\ell}))$ 
as a shortened notation.
\change{Further}, we assume i.i.d. path gains with $\alpha_\ell(\change{n}) \sim \NC(0,\rho_\ell)$\change{\footnote{\change{For $N_t > 1$, the assumption of i.i.d. Gaussian path gains is an approximation as the steering vector $\va_t(\cdot)$ models the phase evolution over time.}}} and $\sum_{\ell=1}^L\rho_\ell=N_\R$, which leads to
the \ac{SNR} defined as \change{$\sigma^{-2}$.}

\subsection{Channel Estimation}
We want to estimate the channel $\vh(T)$ given all observations $\my=[\vy(1),\dots,\vy(T)]$, where we drop the index of $\vh(T)$ due to notational brevity.
\change{We consider the \ac{MSE} as our optimality criterion, which is most frequently used and} defined as
\begin{align}
    \text{MSE} = \E\left[\|\vh - \hat{\vh}\|^2_2\right],
\end{align}
where $\hat{\vh}$ denotes the estimate of $\vh$.

\section{Asymptotic Region Analysis}
\label{sec:high_snr}

\change{In our work, the \textit{asymptotic region} is the region of high \ac{SNR} and/or long coherence intervals $T$.}
For our derivations, we approximate the inner product between steering vectors, for small enough $\Delta\omega$, by~\cite{Shafin2016}
\begin{align}
    |\va^\He(\omega)\va(\omega + \Delta\omega)|^2 
    \approx 1 - \frac{N_\R^2 (\Delta\omega)^2}{12}, \label{eq:inner_2}
\end{align}
where the Taylor expansion is used, and any higher-order terms are neglected.

\subsection{MMSE Channel Estimation}
\label{sec:mmse}

The optimal solution in terms of \ac{MSE} is known to be the \ac{CME} defined as~\cite{Neumann2018}
\begin{align}
    \hat{\vh}_\text{CME}(\my) &= \E[\vh\mid\my] 
    = \E[\E[\vh\mid\my,\vdel]\mid\my]
    \label{eq:cglm}
\end{align}
\change{where the last equality follows from the law of total expectation for any latent variable $\vdel$.
We introduce \ac{wlog} $\vdel=[\bm{\omega},\vrho]$ resulting in $\vh\mid\vdel$ 
being conditionally Gaussian distributed 
with zero mean, cf.~\cite{Boeck2024}}.
\change{The inner expectation of~\eqref{eq:cglm} results in the LMMSE estimator
\begin{align}
 \E[\vh\mid\my,\vdel]=\mw_\vdel\vy =\mc_{\vh\mid\vdel} \left(\mc_{\vh\mid\vdel} + \mc_\vn\right)^{-1}\vy. \label{eq:condLMMSE}
\end{align}
}
The \ac{CME}, given $\my$, can be calculated as~\cite{Neumann2018}
\begin{multline}
    \change{\hat{\vh}_\text{CME}(\my) = \E[\mw_\vdel\mid\my]\vy = \mw_\text{CME}(\my)\vy }\\
    = \frac{\int \exp\left(\frac{T}{\sigma^2}\tr(\mw_\vdel \hat{\mc}_\vy) + T\log\left|\I - \mw_\vdel\right|\right)\mw_\vdel p(\vdel)\id\vdel}{\int \exp\left(\frac{T}{\sigma^2}\tr(\mw_\vdel \hat{\mc}_\vy) + T\log\left|\I - \mw_\vdel\right|\right)p(\vdel)\id\vdel}\vy, \label{eq:mmse_int}
\end{multline}
with $\hat{\mc}_\vy = \frac{1}{T} \my\my^\He$, \change{and $\mw_\vdel$ and $p(\vdel)$ denoting the conditional \ac{LMMSE} filter as given in~\eqref{eq:condLMMSE} and the prior distribution of $\vdel$.}
This \ac{CME} can generally not be evaluated as the underlying distribution \change{$p(\vdel)$} is unknown \change{and the integral is intractable.}

For our asymptotic analysis of the \ac{CME}, we introduce the following lemma.

\begin{lemma}\label{le:prior_multi}
For a prior $p(\vdel)$ that is continuous on its support, let $\mathbb{S}\subset\mathrm{supp}(p(\vdel))$ be a \change{connected} subset of its support.
Further let us denote with $\mathrm{diam}(\mathbb{S})=\mathrm{sup}\{\change{\|}\vdel_1-\vdel_2\change{\|} \mid \vdel_1,\vdel_2\in \mathbb{S}\}$ the diameter of the set $\mathbb{S}$.
A prior $p(\vdel)$ that satisfies
\begin{align}
    &\underset{\vdel \in  \mathbb{S} }{\mathrm{sup}}\left|\frac{\partial p(\vdel)}{\partial\vdel}\frac{1}{p(\vdel)}\right|
    \ll \frac{1}{\mathrm{diam}(\mathbb{S})} 
    \label{eq:prior_constraint} 
\end{align}
is approximately constant over the set $\mathbb{S}$.
\end{lemma}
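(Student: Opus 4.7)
The plan is to convert the bound on the logarithmic derivative $\nabla_\vdel \log p(\vdel) = p(\vdel)^{-1}\,\partial p(\vdel)/\partial\vdel$ into a bound on the \emph{relative} variation of $p(\vdel)$ itself over $\mathbb{S}$, via the fundamental theorem of calculus along a path connecting two arbitrary points of $\mathbb{S}$. Once that is in hand, exponentiating produces a multiplicative deviation from unity that is much smaller than $1$, which is exactly what ``approximately constant'' should mean in this informal sense.

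Concretely, I would fix any two points $\vdel_1,\vdel_2\in\mathbb{S}$ and, using (convex) connectedness of $\mathbb{S}$, pick a differentiable path $\gamma:[0,1]\to\mathbb{S}$ with $\gamma(0)=\vdel_1$, $\gamma(1)=\vdel_2$ of length at most $\mathrm{diam}(\mathbb{S})$ (the straight segment when $\mathbb{S}$ is convex). The fundamental theorem of calculus together with the chain rule for $\log p(\gamma(t))$ gives
\begin{equation}
\log p(\vdel_2)-\log p(\vdel_1)=\int_0^1 \frac{\partial p(\gamma(t))}{\partial\vdel}\frac{1}{p(\gamma(t))}\,\gamma'(t)\,\id t .
\end{equation}
Bounding the integrand by the supremum appearing in~\eqref{eq:prior_constraint} and the path length by $\mathrm{diam}(\mathbb{S})$ produces
\begin{equation}
|\log p(\vdel_2)-\log p(\vdel_1)|\leq \mathrm{diam}(\mathbb{S})\,\sup_{\vdel\in\mathbb{S}}\left|\frac{\partial p(\vdel)}{\partial\vdel}\frac{1}{p(\vdel)}\right|\ll 1 ,
\end{equation}
so that $p(\vdel_2)/p(\vdel_1)=\exp(\log p(\vdel_2)-\log p(\vdel_1))\approx 1$ uniformly over $\vdel_1,\vdel_2\in\mathbb{S}$, which is exactly the assertion that $p$ is approximately constant on $\mathbb{S}$.

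The only real subtlety I expect concerns the word ``connected'': for a general connected set, the geodesic distance between two points can exceed the Euclidean diameter, so the step ``path length $\leq \mathrm{diam}(\mathbb{S})$'' strictly requires either convexity of $\mathbb{S}$ or a mild reinterpretation in terms of an intrinsic diameter. Since the lemma is later invoked on small neighborhoods of a parameter vector $\vdel$ (where convexity is automatic), this obstacle is cosmetic rather than substantive, and the informal ``$\ll$''/``approximately'' phrasing in the statement absorbs the slack introduced by this approximation.
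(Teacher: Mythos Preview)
Your argument is correct and rests on the same idea as the paper's proof: convert the bound on the logarithmic derivative into a bound on the relative variation $p(\vdel_1)/p(\vdel_2)$ via one-variable calculus. The paper applies the mean value theorem to $p$ and then divides by $p(\vdel_2)$, while you integrate $(\log p)'$ along a path and exponentiate; your route is marginally cleaner because the hypothesis is literally a bound on $(\log p)'$, so there is no mismatch between the point where the derivative is evaluated and the point where one normalizes. Your observation about convexity versus mere connectedness applies equally to the paper's argument, which tacitly assumes the segment between $\vdel_1$ and $\vdel_2$ stays in $\mathbb{S}$.
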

\begin{proof}
    See Appendix A in the supplementary material.
\end{proof}

\subsubsection{Single-Path}

In the special case of $L=1$ the \change{latent is} $\delta=[\omega]$ resulting in the conditional \ac{LMMSE}-filter as
\begin{align}
    \mw_\delta = \frac{N_\R}{N_\R+\sigma^2} \va_\delta\va_\delta^\He,
\end{align}
where $\va_\delta = \va(\omega)$ leading to the following theorem.

\begin{theorem}\label{th:cme_single}
Let us denote with $\hat{\omega}$ the estimate of $\omega$ given by the Bartlett beamformer.
Further we assume that~\cref{le:prior_multi} holds for the interval $\mathbb{S}=[\hat{\omega}-k,\hat{\omega}+k]$, with $k$ such that $1-2\exp(-k^2/2C)\approx1$.
Then, \change{in the asymptotic region}
the \ac{CME} filter in~\eqref{eq:mmse_int} for the considered system with $L=1$ is
\begin{align}
    \mw_\text{CME}(\my) = \frac{1}{\sqrt{2\pi C}} \int \exp\left(-\frac{(\delta-\hat{\omega})^2}{2C}\right)\mw_\delta \mathrm{d}\delta, \label{eq:mmse_filter}
\end{align}    
with $C=\frac{6\sigma^2(N_\R+\sigma^2)}{T N_\R^3\Bar{\alpha}}$, $\Bar{\alpha}=\frac{1}{T}\sum_{n=1}^T|\alpha({t})|^2$.
\end{theorem}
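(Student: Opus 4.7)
The plan is to simplify the exponent in \eqref{eq:mmse_int} for $L=1$, recognize the resulting integrand as Gaussian-shaped about the Bartlett maximizer $\hat{\omega}$, and then invoke \cref{le:prior_multi} to eliminate the unknown prior.

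First I would substitute $\mw_\delta = \frac{N_\R}{N_\R+\sigma^2}\va_\delta\va_\delta^\He$ into the exponent. Since this matrix is rank-one with unique nonzero eigenvalue $\frac{N_\R}{N_\R+\sigma^2}$, the determinantal term $\log|\I-\mw_\delta| = \log(\frac{\sigma^2}{N_\R+\sigma^2})$ is independent of $\delta$ and therefore cancels between the numerator and the denominator of \eqref{eq:mmse_int}. The remaining $\delta$-dependent part of the exponent reduces to $\frac{T N_\R}{\sigma^2(N_\R+\sigma^2)}\,\va_\delta^\He\hat{\mc}_\vy\va_\delta$, which is, up to a constant, the Bartlett spectrum of $\hat{\mc}_\vy$ evaluated at $\delta$.

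Next I would localize this spectrum about its maximizer $\hat{\omega}$. In the asymptotic region $\hat{\mc}_\vy$ concentrates around $\bar{\alpha}\,\va(\hat{\omega})\va(\hat{\omega})^\He$, so $\va_\delta^\He\hat{\mc}_\vy\va_\delta \approx \bar{\alpha}\,|\va_\delta^\He\va(\hat{\omega})|^2$. Applying the quadratic approximation \eqref{eq:inner_2} produces $\bar{\alpha}\bigl(1-\tfrac{N_\R^2}{12}(\delta-\hat{\omega})^2\bigr)$, so that after absorbing another $\delta$-independent constant into the normalization, the exponent becomes $-\frac{(\delta-\hat{\omega})^2}{2C}$ with $C=\frac{6\sigma^2(N_\R+\sigma^2)}{T N_\R^3\bar{\alpha}}$. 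Both integrands in \eqref{eq:mmse_int} now have the form $\exp\bigl(-(\delta-\hat{\omega})^2/(2C)\bigr)\,p(\delta)$ times either $\mw_\delta$ or $1$. The hypothesis that \cref{le:prior_multi} holds on $\mathbb{S}=[\hat{\omega}-k,\hat{\omega}+k]$ together with the choice $1-2\exp(-k^2/(2C))\approx 1$ means that essentially all of the Gaussian mass sits inside $\mathbb{S}$ while $p(\delta)$ is approximately constant there. Hence $p(\delta)$ factors out of both integrals and cancels; the denominator integrates to $\sqrt{2\pi C}$, yielding \eqref{eq:mmse_filter}.

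The step I expect to be most delicate is the localization of the integral to $\mathbb{S}$: one must argue that contributions from $\delta\notin\mathbb{S}$ are negligible, which requires that the Bartlett sidelobes of $\hat{\mc}_\vy$ are uniformly dominated by the main lobe in the asymptotic regime so that the exponential in \eqref{eq:mmse_int} actively suppresses them. The rank-one determinant simplification, the Taylor expansion of $|\va_\delta^\He\va(\hat{\omega})|^2$, and the Gaussian normalization are then routine.
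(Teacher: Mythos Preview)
Your proposal is correct and follows essentially the same route as the paper: cancel the $\delta$-independent determinant term, identify the remaining exponent as the Bartlett spectrum, apply the quadratic approximation \eqref{eq:inner_2} about $\hat{\omega}$ to obtain a Gaussian with variance $C$, bound the tails (the paper uses a Chernoff bound, which is precisely where the condition $1-2\exp(-k^2/(2C))\approx 1$ comes from), and then invoke \cref{le:prior_multi} to cancel $p(\delta)$. Your remark that sidelobe suppression is the delicate point is apt; the paper handles this implicitly by assuming the asymptotic regime so that the single Bartlett peak dominates and the quadratic expansion is valid over the contributing interval.
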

\begin{proof}
    See Appendix B in the supplementary material.
\end{proof}
We see in~\cref{th:cme_single}
\change{that in the asymptotic region,}
\change{the \ac{CME} does not depend on the prior $p(\delta)$}
\change{but solely on the Bartlett beamformer estimate $\hat{\omega}$~\cite{Bartlett1950}.}
\change{Additionally, in the asymptotic region, i.e., $C\to0$, there exists a $k$ for every continuous prior such that \cref{le:prior_multi} is fulfilled.}

\subsubsection{Multiple Paths}

In the case of $L >1$, the parameter vector consists of $\vdel=[\bm{\omega},\vrho]=[\omega_1,\dots,\omega_L,\rho_1,\dots,\rho_L]$ and the integral in~\eqref{eq:mmse_int} becomes $2L$ dimensional.
The conditional channel covariance of $\vh\mid\vdel$ is given as
\begin{align}
    \mc_{\vh\mid\vdel} = \ma\mc_\vrho\ma^\He,
\end{align}
with $\ma = [\va(\omega_1),\dots,\va(\omega_L)]$ and $\mc_\vrho = \diag(\rho_1,\dots,\rho_L) \in \C^{L\times L}$.
For large enough arrays, we assume favorable propagation, \change{cf.~\cite[Chap. 2.5.2]{Bjoernson2017},}
rendering the \ac{LMMSE} filter to
\begin{align}
    \mw_\vdel &= \ma\mc_\vrho(\mc_\vrho + \sigma^2\I)^{-1}\ma^\He\\
    &= \sum_{\ell=1}^L \frac{\rho_\ell}{\rho_\ell + \sigma^2}\va(\omega_\ell)\va^\He(\omega_\ell), \label{eq:sum_filter}
\end{align}
which is a sum of $L$ filters along the different directions.

In order to make the high \ac{SNR} analysis feasible, we assume perfect knowledge of the individual path gains. 
This assumption can be motivated by the fact that empirical results show no significant dependency on the estimated path gains once a certain level of estimation quality is achieved.
\begin{corollary}\label{cor:cme_multi}
If~\cref{le:prior_multi} holds for $\mathbb{S}=[\hat{\omega}_1-k_1,\hat{\omega}_1+k_1]\times\dots\times[\hat{\omega}_L-k_L,\hat{\omega}_L+k_L]$,
    based on~\eqref{eq:sum_filter} and perfect path gain knowledge, 
    the \ac{CME} filter is
\begin{multline}
    \mw_\text{CME}(\my) = \\
    \sum_{\ell=1}^L\frac{1}{\sqrt{2\pi C_\ell}}\frac{\rho_\ell}{\rho_\ell + \sigma^2}   \int\exp\left(-\frac{(\delta-\hat{\omega}_\ell)^2}{2C_\ell}\right)\va_\delta\va^\He_\delta \mathrm{d}\delta, \label{eq:mmse_filter_mult}
\end{multline}
with $C_\ell=\frac{6\sigma^2(\rho_\ell+\sigma^2)}{T N_\R^2 \rho_\ell\Bar{\alpha}_\ell}$, $\Bar{\alpha}_\ell=\frac{1}{T}\sum_{n=1}^T|\alpha_\ell({t})|^2$ and $\hat{\omega}_\ell$ denoting the Bartlett estimates of $\omega_\ell$.
\end{corollary}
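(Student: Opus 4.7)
The plan is to reduce the multi-path problem to $L$ decoupled single-path problems by exploiting the favorable propagation assumption already used in~\eqref{eq:sum_filter}, then invoke the same Laplace-type argument underlying \cref{th:cme_single} coordinate by coordinate. Concretely, since path gains are treated as known, the latent variable collapses to $\vdel=\bm{\omega}$ and the CME filter in~\eqref{eq:mmse_int} is an $L$-dimensional integral whose integrand is $\mw_{\bm{\omega}}$ times the posterior weight $\exp\!\bigl(\tfrac{T}{\sigma^2}\tr(\mw_{\bm{\omega}}\hat{\mc}_\vy) + T\log|\I-\mw_{\bm{\omega}}|\bigr)p(\bm{\omega})$. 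The sum structure in~\eqref{eq:sum_filter} immediately lets me pull the summation outside the integral by linearity, so I only need to analyze a generic summand indexed by $\ell$.

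Next I would show that, under favorable propagation, both exponent terms separate additively across $\ell$. For the trace, $\tr(\mw_{\bm{\omega}}\hat{\mc}_\vy)=\sum_\ell\tfrac{\rho_\ell}{\rho_\ell+\sigma^2}\va^\He(\omega_\ell)\hat{\mc}_\vy\va(\omega_\ell)$ by~\eqref{eq:sum_filter}. For the log-determinant, near-orthogonality of $\va(\omega_1),\dots,\va(\omega_L)$ means the nonzero eigenvalues of $\mw_{\bm{\omega}}$ are approximately $\rho_\ell/(\rho_\ell+\sigma^2)$, so $\log|\I-\mw_{\bm{\omega}}|\approx\sum_\ell\log\bigl(\sigma^2/(\rho_\ell+\sigma^2)\bigr)$, which is independent of $\bm{\omega}$ and therefore drops out of the ratio in~\eqref{eq:mmse_int}. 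Consequently, the posterior factorizes as $\prod_\ell q_\ell(\omega_\ell)$, and by~\cref{le:prior_multi} applied to the product set $\mathbb{S}$ the prior $p(\bm{\omega})$ is essentially constant and factorizes away as well.

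Once the posterior is factorized, each factor $q_\ell(\omega_\ell)$ has exactly the structure analyzed in the proof of \cref{th:cme_single}: a monotone function of $|\va^\He(\omega_\ell)\hat{\mc}_\vy\va(\omega_\ell)|$ that peaks at the Bartlett beamformer estimate $\hat{\omega}_\ell$. Applying the quadratic Taylor expansion~\eqref{eq:inner_2} of the steering-vector inner product around $\hat{\omega}_\ell$ and keeping the effective noise variance $\sigma^2/\rho_\ell$ relevant for path $\ell$ produces a Gaussian approximation of $q_\ell$ with variance $C_\ell=6\sigma^2(\rho_\ell+\sigma^2)/(T N_\R^2\rho_\ell\bar{\alpha}_\ell)$—the same computation as in \cref{th:cme_single} but with $N_\R$ replaced by the path-specific effective SNR factor $\rho_\ell$. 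Marginalizing each factor except $\omega_\ell$ integrates out trivially, leaving only the single integral over $\delta=\omega_\ell$ multiplying the $\ell$-th term of the sum filter, which is exactly~\eqref{eq:mmse_filter_mult}.

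The main obstacle I anticipate is rigorously justifying the decoupling of $\log|\I-\mw_{\bm{\omega}}|$: favorable propagation is an asymptotic statement that becomes exact only in the large-array limit, and one must control the cross terms that appear in the determinant expansion to argue that they are negligible compared with the dominant $T$-scaling in the exponent. A secondary technical point is checking that \cref{le:prior_multi} still applies on the product neighborhood $\mathbb{S}$ with diameter governed by the largest $k_\ell$; this follows from the same $C_\ell\to 0$ argument used after \cref{th:cme_single}, but has to be stated in the multi-dimensional form so that the prior can be pulled out of the joint integral before factorization.
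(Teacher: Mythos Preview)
Your proposal is correct and follows essentially the same route as the paper: decouple the filter via the sum decomposition~\eqref{eq:sum_filter}, then invoke \cref{th:cme_single} path by path. The paper's proof is a single sentence to this effect; you have simply unpacked the factorization of the exponent (trace separating, log-determinant constant under favorable propagation, prior flattening via \cref{le:prior_multi}) and the marginalization step explicitly, which is exactly the content behind that sentence.
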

\begin{proof}
    Based on the decomposition in~\eqref{eq:sum_filter}, the result follows from~\cref{th:cme_single} for each direction separately.
\end{proof}
As in \cref{th:cme_single}, the prior $p(\vdel)$ does not appear in~\eqref{eq:mmse_filter_mult} \change{if the given conditions hold},
particularly also if the individual $\omega_\ell$ statistically depend on each other.

\subsection{Parametric Bayesian Channel Estimation}

\change{The \ac{CME} in \cref{th:cme_single} and \cref{cor:cme_multi} solely relies on the parameter estimates $\hat{\vdel}=[\hat{\bm{\omega}},\vrho]$.
Similarly, in parametric channel estimation, these parameters are estimated as an intermediate step.
Commonly, these estimates are used
to directly build a channel estimate, 
e.g.,~\cite{Zhang2021}.
On the contrary, the conditional Gaussian property of $\vh\mid\vdel$ can be utilized, as done by the \ac{CME}, to parameterize the \ac{LMMSE} as}
\begin{align}
    \hat{\vh}_\text{\change{PBCE}}(\my) = \change{\E[\vh\mid\my,\hat{\vdel},\new{\sigma^2}]} = \mw_{\hat{\vdel}\new{,\sigma^2}}\vy, \label{eq:filter_doa}
\end{align}
which we refer to as the \ac{PBCE}. 
\change{This estimator does not utilize any prior distribution of $\vdel$, and uses}
the  
maximum a posteriori
estimate of the filter $\mw_\vdel$ instead of solving the \ac{CME} integral.
\change{Independent of the used parameter estimator,}
the \ac{PBCE} is generally biased, 
\change{hence, a comparison to the \ac{CME} is more applicable than to the \ac{CRB}.}

\change{\begin{theorem} \label{th:quad_conv}
    The difference between the \acp{AB} of the \ac{CME} and the \ac{PBCE}  vanishes quadratically with the noise variance as
\begin{align}
    \lim_{\sigma^2\to0} \frac{\log\left(|\textsc{CME}^\textsc{AB} - \textsc{PBCE}^\textsc{AB}|\right)}{\log \sigma^2} = 2. 
    \label{eq:lim_expo}
\end{align}
\end{theorem}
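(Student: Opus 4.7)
The plan is to combine the orthogonality property of the \ac{CME} with the Gaussian integral representation of $\mw_\text{CME}$ from \cref{th:cme_single} and \cref{cor:cme_multi}, and then perform a second-order Taylor expansion of $\va_\delta\va_\delta^\He$ around the Bartlett estimate.

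First, I reduce the \ac{MSE} gap to a filter-discrepancy norm. Since $\hat{\vh}_\text{CME}=\E[\vh\mid\my]$ is a conditional expectation, the orthogonality $\E[(\vh-\hat{\vh}_\text{CME})\vf(\my)^\He]=\mathbf{0}$ for any measurable $\vf$ yields the quadratic-around-the-optimum identity
\begin{equation*}
\text{MSE}(\hat{\vh}) - \text{MSE}(\hat{\vh}_\text{CME}) = \E\bigl[\|\hat{\vh}_\text{CME}(\my) - \hat{\vh}(\my)\|_2^2\bigr].
\end{equation*}
Setting $\hat{\vh}=\hat{\vh}_\text{PBCE}=\mw_{\hat{\vdel}}\vy$ and using \ac{MSE}-optimality of the \ac{CME} to drop the absolute value gives $|\textsc{CME}^\textsc{AB}-\textsc{PBCE}^\textsc{AB}| = \E[\|(\mw_\text{CME}(\my)-\mw_{\hat{\vdel}})\vy\|_2^2]$.

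Second, I Taylor-expand the integrand of \cref{cor:cme_multi} around the Bartlett estimates. With $u=\delta-\hat{\omega}_\ell$, writing
\begin{equation*}
\va_{\hat{\omega}_\ell+u}\va_{\hat{\omega}_\ell+u}^\He = \va_{\hat{\omega}_\ell}\va_{\hat{\omega}_\ell}^\He + u\,\partial_\omega(\va\va^\He)\big|_{\hat{\omega}_\ell} + \tfrac{u^2}{2}\,\partial_\omega^2(\va\va^\He)\big|_{\hat{\omega}_\ell} + O(u^3)
\end{equation*}
and integrating against $\mathcal{N}(0,C_\ell)$, the odd powers vanish by symmetry, the $u^2$ term contributes $(C_\ell/2)\,\partial_\omega^2(\va\va^\He)|_{\hat{\omega}_\ell}$, and the remainder is $O(C_\ell^2)$. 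Because $C_\ell = \Theta(\sigma^2)$ as $\sigma^2\to 0$ by the formula in \cref{cor:cme_multi}, the Frobenius norm of $\mw_\text{CME}-\mw_{\hat{\vdel}}$ is $\Theta(\sigma^2)$ to leading order.

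Third, I translate this into the \ac{MSE} gap. Since $\E[\|\vy\|_2^2]=\E[\|\vh\|_2^2]+N_\R\sigma^2\to N_\R$ remains bounded as $\sigma^2\to 0$, the gap equals the squared Frobenius norm of the filter difference up to bounded factors, hence $\Theta(\sigma^4)$. Taking $\log$, dividing by $\log\sigma^2$, and using $\log\sigma^4=2\log\sigma^2$ yields the limit $2$.

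The main obstacle is the matching lower bound: the upper bound $O(\sigma^4)$ is immediate from the Taylor remainder, whereas excluding exponents strictly greater than $2$ requires that the leading coefficient $\partial_\omega^2(\va_\delta\va_\delta^\He)|_{\hat{\omega}_\ell}\vy$ not cancel in the quadratic form after expectation. This holds generically because $\partial_\omega^2(\va_\delta\va_\delta^\He)$ is linearly independent of $\va_\delta\va_\delta^\He$ for the \ac{ULA} steering vector. A secondary technicality is that the Gaussian weighting in \cref{th:cme_single} has tails extending beyond the interval $\mathbb{S}$ on which \cref{le:prior_multi} was applied; this can be handled by splitting the $\delta$-integral on $\mathbb{S}$ and its complement and exploiting the super-exponential Gaussian decay together with the boundedness of $\mw_\delta$.
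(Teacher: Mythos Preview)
Your approach is correct but takes a genuinely different route from the paper. The paper does not invoke the Pythagorean/orthogonality identity at all. Instead, it expands $\E[\|\vh-\mw\vy\|_2^2]$ separately for the asymptotic \ac{CME} filter of \cref{cor:cme_multi} and for the \ac{PBCE} filter, using the quadratic steering-vector approximation~\eqref{eq:inner_2} and the asymptotic efficiency of the angle estimator (which brings in $\mathrm{CRB}_\omega$). This yields the two closed-form expressions \eqref{eq:nmse_mmse_mult} and \eqref{eq:nmse_doa_mult}; subtracting them, all $\mathrm{CRB}_\omega$ terms cancel and only the $\bar{C}_\ell$-dependent pieces survive, giving a difference proportional to $\sum_\ell \rho_\ell^2\bar{C}_\ell\,\sigma^2/(\rho_\ell+\sigma^2)^2$, which is $\Theta(\sigma^4)$ since $\bar{C}_\ell=\Theta(\sigma^2)$. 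Your argument instead isolates the mechanism directly: the asymptotic \ac{CME} filter is a Gaussian smoothing of the \ac{PBCE} filter with variance $C_\ell=\Theta(\sigma^2)$, so the filter discrepancy is $\Theta(\sigma^2)$ and the Pythagorean identity squares it. This is more conceptual and sidesteps the full \ac{MSE} computations; the paper's route, on the other hand, produces the explicit bounds \eqref{eq:nmse_mmse_mult}--\eqref{eq:nmse_doa_mult} that are reused in the numerical section. One point to tighten: the orthogonality identity holds for the \emph{true} \ac{CME}, whereas $\textsc{CME}^\textsc{AB}$ is the \ac{MSE} of the asymptotic filter from \cref{cor:cme_multi} evaluated under the quadratic approximation; your argument therefore controls the exact \ac{MSE} gap and implicitly assumes the approximation error in each \ac{AB} is $o(\sigma^4)$, which is consistent with the paper's asymptotic framework but should be stated explicitly.
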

\begin{proof}
    \begin{table*}[b]
\normalsize
\hrule
\begin{align}
    \change{\text{CME}^\text{AB}} \hspace{-1pt}
    &=\hspace{-1pt} N_\R \hspace{-1pt}-\hspace{-1pt}2\sum_{\ell=1}^L\left(\frac{\rho_\ell}{\rho_\ell+\sigma^2}\right)\left(\rho_\ell(1\hspace{-1pt} - \hspace{-1pt}2 B\Bar{C}_\ell)\hspace{-1pt}-\hspace{-1pt} 2B\;\mathrm{CRB}_{\change{\omega}}\right)\hspace{-1pt} + \hspace{-1pt}\sum_{\ell=1}^L\left(\frac{\rho_\ell}{\rho_\ell+\sigma^2}\right)^2\left(\rho_\ell(1 \hspace{-1pt}-\hspace{-1pt} 4 B\Bar{C}_\ell)\hspace{-1pt} - \hspace{-1pt}2B\;\mathrm{CRB}_{\change{\omega}}\hspace{-1pt} +\hspace{-1pt}\sigma^2\right) \label{eq:nmse_mmse_mult}
\end{align}
   \begin{align}
    \change{\text{PBCE}^\text{AB}}
    &= N_\R \hspace{-1pt}-\hspace{-1pt}2\sum_{\ell=1}^L\frac{\rho_\ell}{\rho_\ell+{\sigma^2}}\left(\rho_\ell\hspace{-1pt} - \hspace{-1pt}2B\;\mathrm{CRB}_{\change{\omega}}\right) \hspace{-1pt}+\hspace{-1pt} \sum_{\ell=1}^L\left(\frac{\rho_\ell}{\rho_\ell+{\sigma^2}}\right)^2\left(\rho_\ell \hspace{-1pt}- \hspace{-1pt}2B\;\mathrm{CRB}_{\change{\omega}} \hspace{-1pt}+ \hspace{-1pt}\sigma^2\right) \label{eq:nmse_doa_mult}
\end{align} 
\end{table*}%
The \ac{AB} 
of the \ac{CME} 
is shown in~\eqref{eq:nmse_mmse_mult} at the bottom of the page with $B=\frac{N_\R^2}{24}$ and $\Bar{C}_\ell=\E[C_\ell]$.
The calculations for the single-path case are sketched in Appendix C in the supplementary material, which can be easily extended to the multi-path case.
The \ac{AB}
of the \ac{PBCE} 
can be found similarly, resulting in~\eqref{eq:nmse_doa_mult}, also shown at the bottom of the page.
The limit in~\eqref{eq:lim_expo} follows from~\eqref{eq:nmse_mmse_mult} and~\eqref{eq:nmse_doa_mult}.
\end{proof}}

\change{In the case of $T\to \infty$ the bounds converge linearly to each other.
Both \ac{MSE} formulations depend on the parameter's \ac{CRB} as the \ac{CME} and the \ac{PBCE} use (or are assumed to use) an unbiased parameter estimator.}
If a parameter estimator is chosen \change{for the \ac{PBCE}}, which converges faster to the \ac{CRB} than the Bartlett beamformer, \change{cf.~\cite{Gazzah2011},} the \ac{PBCE} converges faster to its asymptote than the \ac{CME}.\footnote{One should note that for $L>1$ standard \ac{DoA} estimators are only asymptotically efficient for high $T$, not high \ac{SNR}.}

\new{In case of a noise variance mismatch $\hat{\sigma}^2 = (1+\varepsilon)\sigma^2$, with some $\varepsilon\in\mathbb{R}$, the \ac{AB} of the \ac{PBCE} still converges quadratically to the \ac{CME}, cf. Appendix D in the supplementary material.}

\section{Numerical Simulations}
\label{sec:simulations}

\begin{figure}
	\centering
	\includegraphics[]{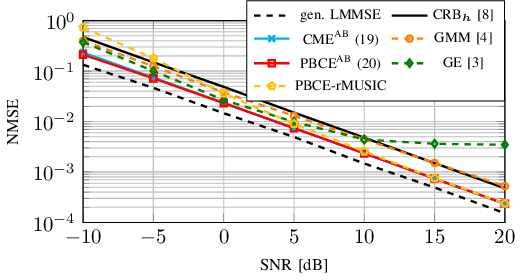}
	\\[-3mm]
	\caption{
		NMSE over the \ac{SNR} in the $L=1$ path and $N_\R=64$ antenna scenario based on $T=1$ observations.
	}
	\label{fig:nmse_SNR_1path}
\end{figure}

\change{We validate the derived bounds using}
the \ac{NMSE} defined as $\frac{1}{N_\R}\text{MSE}$,
\change{which we approximate}
by $M=10^3$ Monte-Carlo simulations.
If not stated otherwise, we use $N_\R=64$ antennas.
For the prior distribution, we use $4$ regions with probabilities $p_i\in\{0.1, 0.5, 0.2, 0.2\}$, where each region can be described by a Gaussian \ac{PDF} with $\mu_i\in\{-70^\circ,-30^\circ,20^\circ,60^\circ\}$ and $\sigma_i\in\{5^\circ,10^\circ,5^\circ,10^\circ\}$, respectively.
All path angles are drawn from the same 
region.\footnote{The quantitative results of the simulations are the same for any prior satisfying~\cref{le:prior_multi}.}
Additionally, we uniformly sample $\rho_\ell \sim \mathcal{U}[0,N_\R]$ and normalize it such that $\sum_{\ell=1}^L\rho_\ell=N_\R$.

For the \ac{PBCE},
we use \ac{rMUSIC}~\cite{Barabell1983} to estimate the $L$ \acp{DoA}, 
as \ac{rMUSIC} 
does not suffer from the grid mismatch problem and 
is shown to be asymptotically efficient.
We estimate the path gains with~\cite{Schmidt1986}
\begin{align}
    \hat{\vrho} = \diag\left((\hat{\ma}^\He\hat{\ma})^{-1}\hat{\ma}^\He(\hat{\mc}_\vy - \sigma^2\I)\hat{\ma}(\hat{\ma}^\He\hat{\ma})^{-1}\right), \label{eq:gain_est}
\end{align}
where $\hat{\ma}=[\va(\hat{\omega}_1),\dots,\va(\hat{\omega}_L)]$ holds the steering vectors based on the parameter estimates, \new{and assume knowledge of the true \ac{SNR} as the noise variance (or equivalently \ac{SNR}) can be estimated over a large scale interval ($\gg T$)}.

We also compare to the channel estimator from~\cite{Koller2022}, which does not use structural information (i.e., it is non-parametric) and approximates the \ac{CME} by learning a \ac{GMM} representation of the underlying channel distribution.
In the single-path case, we evaluate the \ac{GE} introduced in~\cite{Neumann2018}, which uses a sampled version of the true channel distribution, making the integral in~\eqref{eq:mmse_int} feasible to calculate.
Lastly, the utopian genie-aided \ac{LMMSE} estimator is simulated with
$
    \hat{\vh}_\text{gen. LMMSE}(\my,\vdel) = \E[\vh\mid\my,\vdel] = \mw_\vdel\vy,
$
\change{which uses perfect knowledge of the channel parameters.}

\subsection{Single-Path}

\cref{fig:nmse_SNR_1path} shows the channel estimation results based on $T=1$ observations for $L=1$ based on different \ac{SNR} values.
We can directly see that the \change{``CRB$_\vh$'' from~\cite{Larsen2009}}, which is the lower bound for unbiased channel estimators, \change{is outperformed by} the Bayesian-based estimators.
We see that the \ac{PBCE} based on \ac{rMUSIC} converges for high \ac{SNR} values to the derived \change{bounds} of 
the \ac{CME}~\eqref{eq:nmse_mmse_mult} and
\ac{PBCE}~\eqref{eq:nmse_doa_mult}.
This convergence validates the derived \acp{MSE} formulations.
Additionally, we observe that in the low \ac{SNR} region, the \change{\ac{PBCE}} degrades in performance compared to the \ac{GMM} and the \ac{GE}.
This decline is primarily because the \ac{PBCE} fully trusts its parameter estimates (which become worse for low \ac{SNR}) and does not take noise-dependent uncertainties into account.
\change{For higher \ac{SNR} values,} the \ac{GMM}-based \ac{CME} approximation shows inferior performance compared to \ac{PBCE}.
Furthermore, we see that for high \ac{SNR}, the \ac{GE} becomes victim to the grid mismatch problem, saturating for \ac{SNR} values above $10$dB.
Interestingly, the \ac{CME} does not coincide with the genie-aided \ac{LMMSE} even for high \ac{SNR}, which shows that \change{achieving} the \ac{MMSE}
\change{is possible without
perfect channel parameter knowledge.}

\subsection{Multiple Paths}

\begin{figure}
	\centering
	\includegraphics[]{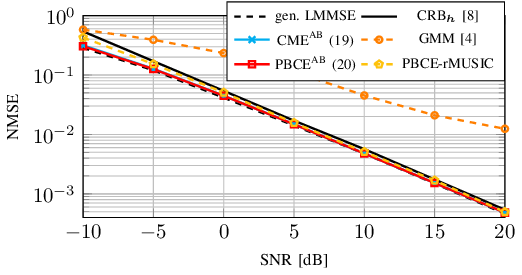}
	\\[-3mm]
	\caption{
NMSE over the \ac{SNR} for $L=3$ paths based on $T=16$ observations.
 }
	\label{fig:nmse_SNR_3path_16N}
\end{figure}

\begin{figure}
	\centering
	\includegraphics[]{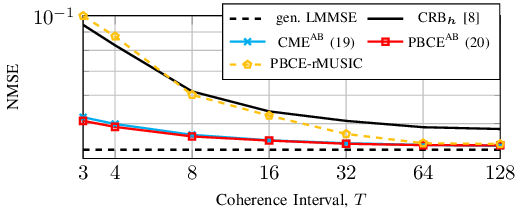}
	\\[-3mm]
	\caption{
    NMSE over the length of the coherence interval $T$ at SNR $=0$dB for $L=3$ paths.
 }
	\label{fig:nmse_snaps_3path}
\end{figure}

\begin{figure}
	\centering
	\includegraphics[]{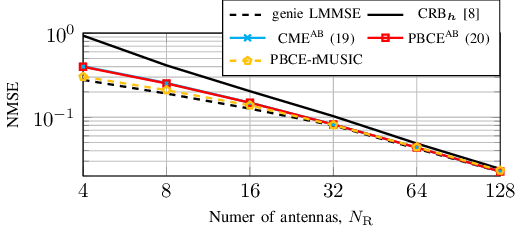}
	\\[-3mm]
	\caption{
    NMSE over the number of antennas at SNR $=0$dB and $T=64$ for $L=3$ paths.
 }
	\label{fig:nmse_ant_3path}
\end{figure}

For the remaining simulations, we assume channels with $L=3$ paths, where the number of paths is assumed to be known at the receiver.\footnote{This assumption is reasonable as good model order selection can be achieved for high \ac{SNR} and/or a high number of snapshots, cf.~\cite{Wax1985}.} In~\cref{fig:nmse_SNR_3path_16N}, the \ac{NMSE} performance is shown for $T=16$ observations
to ensure asymptotic performance of \ac{rMUSIC}.
Again, the two derived \ac{MSE} formulations in~\eqref{eq:nmse_mmse_mult} and~\eqref{eq:nmse_doa_mult} show the best \ac{NMSE}, which are only surpassed by the genie-aided \ac{LMMSE}.
One should note that in the multi-path setup, the derived \ac{MSE} formulation~\eqref{eq:nmse_mmse_mult} and~\eqref{eq:nmse_doa_mult} do not represent an asymptotic \ac{MSE} formulation but rather give a lower bound on the asymptotic \ac{MSE} as we assumed perfect path gain knowledge during the derivations.

As the asymptotic efficiency of \ac{rMUSIC} holds for large $T$,
~\cref{fig:nmse_snaps_3path} shows the performance for SNR$=0$dB over the length of the coherence interval $T$.
As expected, the performance decreases for all estimators approaching the \ac{CME} bound~\eqref{eq:nmse_mmse_mult}, validating the latter.

The last simulations in~\cref{fig:nmse_ant_3path} show the performance over different numbers of antennas in the asymptotic region at $T=64$ and \ac{SNR}$=0$dB. 
First, 
we realize that for low numbers of antennas, the \ac{PBCE} using \ac{rMUSIC} outperforms our derived bounds.
In this region, the assumption of favorable propagation no longer holds, making our bounds invalid.
Additionally, we see that for large $N_\R$, the gap between the \change{``CRB$_\vh$''} and the \ac{CME} formulation narrows.

\section{Conclusion}

In this work, we derived a \ac{CME} formulation for high \ac{SNR} and/or long coherence intervals in \ac{mmWave} systems, which holds for distributions on the channel parameters that fulfill a specific criterion on the slope of their \ac{PDF}.
Based on this formulation, we showed that \ac{PBCE} can achieve this \ac{MSE}-optimal performance in the asymptotic region.
Numerical simulations validate that \ac{PBCE} converges to the derived bounds, making it a superior channel estimation framework to unbiased parametric channel estimation.

\newpage
\balance
\bibliographystyle{IEEEtran}

\bibliography{IEEEabrv,mybib}

\newpage
\appendix

\begin{table*}[b]
\normalsize
\hrule
    \begin{align}
    \vh^\He\mw^\He\mw\vh 
    = \frac{1}{2\pi C} \left(\frac{N_\R}{N_\R+\sigma^2}\right)^2 
    \int\int \exp\left(-\frac{(\delta-\hat{\omega})^2}{2C}\right)\exp\left(-\frac{(\zeta-\hat{\omega})^2}{2C}\right)\vh^\He\va_{\delta}\va_{\delta}^\He\va_{\zeta}\va_{\zeta}^\He\vh \id \delta \id \zeta. \tag{\ref{eq:phantom}}\label{eq:chan_pow_long_int}
\end{align}
\end{table*}

\subsection{Proof of~\cref{le:prior_multi}}
\label{app:proof_prior_multi}

If the constraint~\eqref{eq:prior_constraint} is fulfilled, 
based on the mean value theorem
for two arbitrary points $\vdel_1,\vdel_2 \in \mathbb{S}$ there exists a $\vdel \in \mathbb{S}\setminus \partial\mathbb{S}$ such that
\begin{align}
    \left|\frac{p(\vdel_1)-p(\vdel_2)}{(\vdel_1-\vdel_2)p(\vdel_2)}\right| = \left|\frac{\partial p(\vdel)}{\partial\vdel}\frac{1}{p(\vdel)}\right|\ll \frac{1}{\mathrm{diam}(\mathbb{S})}.
\end{align}
In the case of $p(\vdel_1)\geq p(\vdel_2)$ we have
\begin{align}
    1 \leq \frac{p(\vdel_1)}{p(\vdel_2)} \ll 1 + \frac{|\vdel_1-\vdel_2|}{\mathrm{diam}(\mathbb{S})} \leq 2,
\end{align}
and for $p(\vdel_1) < p(\vdel_2)$ we have
\begin{align}
    1 > \frac{p(\vdel_1)}{p(\vdel_2)} \gg 1 - \frac{|\vdel_1-\vdel_2|}{\mathrm{diam}(\mathbb{S})} \geq 0,
\end{align}
from which we can conclude $p(\vdel_1)\approx p(\vdel_2)\;\forall\vdel_1,\vdel_2 \in \mathbb{S}$.
\qed

\subsection{Proof of~\cref{th:cme_single}}
\label{app:proof_cme_single}

The term $T\log\left|\I - \mw_\delta\right|$ in~\eqref{eq:mmse_int} is independent of $\delta$ and, hence, 
the resulting optimal \change{\ac{CME} filter}, given $\my$, is
\begin{align}
    \mw_\text{CME}(\my) = \frac{\int \exp\left(\frac{TN_\R}{\sigma^2(N_\R+\sigma^2)}\va_\delta^\He\hat{\mc}_\vy \va_\delta\right)\mw_\delta p(\delta) \id\delta}{\int \exp\left(\frac{TN_\R}{\sigma^2(N_\R+\sigma^2)}\va_\delta^\He\hat{\mc}_\vy \va_\delta\right)p(\delta)\id\delta}. \label{eq:mmse_int_2}
\end{align}
The argument of the exponential function \change{in~\eqref{eq:mmse_int_2}}
resembles the Bartlett beamformer~\cite{Bartlett1950}.
\change{In the asymptotic region, the Bartlett spectrum exhibits one distinct peak, which we approximate} around its maximum value $\hat{\omega}$ \change{using~\eqref{eq:inner_2}} as
\begin{align}
    \va_\delta^\He\hat{\mc}_\vy \va_\delta &\approx  \Bar{\alpha}\; |\va^\He_\delta\va(\hat{\omega})|^2 \approx \Bar{\alpha} \left(1 - \frac{ N_\R^2(\delta - \hat{\omega})^2}{12}\right),
\end{align}
which \change{approaches} equality for small $\delta - \hat{\omega}$.
The needed exponential function of~\eqref{eq:mmse_int_2} can therefore be written as
\begin{multline}
    \exp\left(\frac{TN_\R}{\sigma^2(N_\R+\sigma^2)}\va_\delta^\He\hat{\mc}_\vy \va_\delta \right) \\
    =\exp\left(\frac{TN_\R\Bar{\alpha}}{\sigma^2(N_\R+\sigma^2)}\right)\exp\left(-\frac{(\delta-\hat{\omega})^2}{2C}\right). 
\end{multline}
Furthermore, 
we realize that the exponential function is proportional to a Gaussian \ac{PDF} with variance $C$.
Based on the Chernoff bound, the integral of the tails can be bounded as
\begin{align}
    \int_{\hat{\omega}+k}^\infty \exp\left(-\frac{(\delta-\hat{\omega})^2}{2C}\right) \id \delta \leq \sqrt{2\pi C} \exp\left(\frac{-k^2}{2C}\right),
\end{align}
respectively, which becomes negligible for large enough $k$.
If now the prior $p(\delta)$ satisfies~\cref{le:prior_multi} with $\mathbb{S}=[\hat{\omega}-k,\hat{\omega}+k]$, while $k$ being large enough to neglect the tails,
it is approximatively constant in the contributing interval,
and, hence, cancels \change{out}, which concludes the proof. \qed

\subsection{MSE Calculation for Single-Path}\label{app:mse_single}

The \ac{MSE} is given as
\begin{align}
    &\E[\|\vh - \hat{\vh}\|^2] = \E[\|\vh - \mw\vy\|^2] \nonumber\\
    &\quad= N_\R - \E[2\vh^\He\mw\vh] + \E[\|\mw\vh\|^2] + \E[\|\mw\vn\|^2],
\end{align}
where $\mw$ denotes the current filter of interest and $\hat{\vh}=\mw\vy$ is the estimate of $\vh$.
In the case of~\eqref{eq:mmse_filter} we have
\begin{align}
    &\E[\vh^\He\mw\vh] \\
    &\quad=\E\left[\frac{1}{\sqrt{2\pi C}} \frac{N_\R}{N_\R+\sigma^2} \int \exp\left(-\frac{(\delta - \hat{\omega})^2}{2C}\right) |\vh^\He\va_\delta|^2 \id\delta\right] \\
    &\quad=\frac{N_\R}{N_\R+\sigma^2} \left(N_\R(1 - 2B\Bar{C}) - 2B\E[|\alpha|^2(\Delta\omega)^2]\right),
\end{align}
with $\Bar{C}=\E[C]$, where in the last step the approximation~\eqref{eq:inner_2} is used.
In~\cite{Gazzah2011} it has been proven that the Bartlett estimator is asymptotically unbiased and efficient in the single source case, hence, using
the law of total expectation we have
\begin{multline}
    \E[|\alpha|^2(\Delta\omega)^2] = \E_{\bm{\alpha}}\left[|\alpha|^2\;\E_{\left(\Delta\omega\right)^2\mid\bm{\alpha}}
    \left[\left(\Delta\omega\right)^2\big|\bm{\alpha}\right]\right]\\
    =  \frac{\sigma^2}{2T}\left[\Re\{\dot{\va}^\He(\I - \va\va^\He)^{-1}\dot{\va}\}\right]^{-1} =\mathrm{CRB}_{\change{\omega}},
    %
\end{multline}
where $\dot{\va}=\partial\va(\omega)/\partial\omega$.
\refstepcounter{equation}
\label{eq:phantom}%

Similarily, for $\E[\|\mw\vh\|^2]$ two integrals have to be evaluated as shown in~\eqref{eq:chan_pow_long_int} at the bottom of this page, where the following approximation can be used
\begin{align}
    &\Re\{\vh^\He\va_{\delta}\va_{\delta}^\He\va_{\zeta}\va_{\zeta}^\He\vh\}\nonumber\\
    &\quad= \frac{|\alpha|^2}{N_\R^3}
    \frac{\sin\left(\frac{ N_\R(\delta-\omega)}{2}\right)}{\sin\left(\frac{\pi (\delta-\omega)}{2}\right)}
    \frac{\sin\left(\frac{ N_\R(\zeta-\delta)}{2}\right)}{\sin\left(\frac{\pi (\zeta-\delta)}{2}\right)}
    \frac{\sin\left(\frac{ N_\R(\omega-\zeta)}{2}\right)}{\sin\left(\frac{\pi (\omega-\zeta)}{2}\right)}\\
    &\quad\approx |\alpha|^2\; \left(1 - B(\delta-\omega)^2 - B(\zeta-\delta)^2 - B(\omega-\zeta)^2\right),
\end{align}
which holds with equality for small differences in angles.

\subsection{\new{Noise Mismatch Analysis of the PBCE Bound}}
\label{app:sensi}
\new{Using the Taylor expansion, we can rewrite the \ac{PBCE} bound difference due to noise mismatch, i.e., $\hat{\sigma}^2=(1+\varepsilon)\sigma^2$, as
\begin{multline}
    \mathrm{PBCE}^\mathrm{AB}(\hat{\sigma}^2) - \mathrm{PBCE}^\mathrm{AB}({\sigma^2}) \\= \sum_{\ell=1}^L\frac{\rho_\ell^2(\varepsilon\sigma^2)^2}{(\rho_\ell + \sigma^2)^3} 
    + \mathcal{O}\left(\frac{(\varepsilon\sigma^2)^3}{(\rho_\ell + \sigma^2)^4}\right). \label{eq:pbce_est}
\end{multline}
In the high \ac{SNR} limit, this difference vanishes quadratically as
\begin{align}
     \lim_{\sigma^2\to0} \frac{\log\left(|\mathrm{PBCE}^\mathrm{AB}(\hat{\sigma}^2) - \mathrm{PBCE}^\mathrm{AB}({\sigma^2})|\right)}{\log \sigma^2} 
     = 2. 
\end{align}
Hence, the quadratic convergence of \ac{PBCE} is preserved as the error due to the noise mismatch also vanishes quadratically.}

\end{document}